\newtheoremstyle{spaced}
  {10pt}   
  {10pt}   
  {\itshape} 
  {}       
  {\bfseries} 
  {.}      
  { }      
  {}       
\theoremstyle{spaced}
\newtheorem{definition}{Definition}
\newtheorem{remark}{Remark}
\newtheorem{proposition}{Proposition}
\newtheorem{myexample}{Example}
\title{A decomposition approach for large virtual network embedding}
\author{
  Amal Benhamiche \\
  Orange Research \\
  F-92320 Châtillon, France \\  \texttt{amal.benhamiche@orange.com} \\
   \And
    Pierre Fouilhoux \\
   Laboratoire d'Informatique de Paris Nord, LIPN, CNRS \\
  Université Sorbonne Paris Nord\\
  F-93430 Villetaneuse, France   \\
  \texttt{pierre.fouilhoux@lipn.univ-paris13.fr} \\
  \And
   Lucas Létocart \\
   Laboratoire d'Informatique de Paris Nord, LIPN, CNRS \\
  Université Sorbonne Paris Nord\\
  F-93430 Villetaneuse, France   \\
  \texttt{lucas.letocart@lipn.univ-paris13.fr} \\
  \And
  Nancy Perrot \\
  Orange Research \\
  F-92320 Châtillon, France \\
  \texttt{nancy.perrot@orange.com} \\
   \And
  Alexis Schneider \\
  Orange Research \\
  F-92320 Châtillon, France \\
  \texttt{alexis.schneider@orange.com} \\
  \& Laboratoire d'Informatique de Paris Nord, LIPN, CNRS \\
  Université Sorbonne Paris Nord\\
  F-93430 Villetaneuse, France   \\
  \texttt{alexis.schneider@lipn.univ-paris13.fr} \\
}
\begin{document}

\maketitle
\begin{abstract}
Virtual Network Embedding (VNE) 
is the core combinatorial problem of Network Slicing, a 5G technology which enables telecommunication operators to propose diverse service-dedicated virtual networks, embedded onto a common substrate network. VNE asks for a minimum-cost mapping  of a virtual network on a substrate network, encompassing simultaneous node placement and edge routing decisions.
On a benchmark of large virtual networks with realistic topologies we compiled, the state-of-the art heuristics often provide expensive solutions, or even fail to find a solution when resources are sparse. 
We introduce a new integer linear formulation together with a decomposition scheme based on an automatic partition of the virtual network. This results in a column generation approach 
whose pricing problems are also VNE problems. This method allows to compute better lower bounds than state-of-the-art methods.
Finally, we devise an efficient Price-and-Branch heuristic for large instances.
\end{abstract}

\keywords{Virtual Network Embedding \and Network Slicing \and Mathematical Programming \and  Decomposition \and Column Generation}

\section{Introduction}

In the early 2000s, \textit{network virtualization} was proposed to overcome the Internet's growing ossification and to foster innovations in networks \cite{anderson2005overcoming, turner2005diversifying, chowdhury2010survey}. 
In this concept, the physical network hosts multiple \textit{virtual networks}, each with its own architecture, protocols, and users. 
A virtual network is a logical topology composed of virtual nodes, representing, for example, a Virtual Network Function (VNF), a data center or a container, and virtual edges, which imply bandwidth requirements between those nodes. 
The \textit{infrastructure provider} is responsible for deploying and managing these virtual networks on the underlying physical infrastructure. 
One of the key challenges in network virtualization is the assignment of virtual elements to the physical infrastructure, which considers simultaneous placement (of the virtual nodes) and routing (of the virtual edges) decisions. The underlying combinatorial problem is known as the Virtual Network Embedding (VNE). 

Network virtualization introduces an abstraction layer between service providers and the physical network, enabling flexible, efficient, and automated network management.
It was later adopted as a cornerstone of 5G architecture under the name \textit{network slicing} \cite{alliance2016description, 3gpp2018slicing}. 
A \textit{slice} corresponds to an end-to-end virtual network tailored to a specific service or family of services, with specific QoS requirements such as low latency for Ultra-Reliable Low-Latency Communication (URLLC) applications or high bandwidth for Enhanced Mobile Broadband (eMBB). 
Network slicing is pivotal to enable, within 5G, emerging technologies such as autonomous vehicles or Internet of Things \cite{foukas2017network, ordonez2017network, afolabi2018network}. 
With the large-scale deployment of 5G Standalone (SA), real-world slices have begun to appear: T-Mobile has introduced a dedicated first-responder slice in the United States \cite{Tmobile2025priority}, while O2 Telefónica and Siemens deployed an industrial slice for the water sector in Germany \cite{Siemens2025water}. 
These slices cover large portions of the physical network, spanning multiple cities. 
Because they serve critical services and essential infrastructure (likewise emergency response or water supply), their implementation onto the physical network must be carefully planned to ensure robustness and long-term stability. 
In this paper, our aim is to propose efficient algorithms for such slice cases.

\subsection{Problem Statement}

The Virtual Network Embedding Problem (VNE) can be defined as follows. 
We consider simple, connected, undirected graphs. Let us denote the virtual network as $\Graph_r = (V_r, E_r)$ with $n_r$ nodes; and the substrate network as $\Graph_s = (V_s, E_s)$ with $n_s$ nodes. 
Allocating the substrate resources to the virtual demands is to find a \textit{mapping} (also called an embedding) as follows:

\begin{definition}
    A mapping $m = (m_V, m_E)$ of $\Graph_r$ on $\Graph_s$ is a pair of functions where: 
   \begin{itemize}
       \item $m_V: V_r \rightarrow V_s$ is called the \textit{node placement};
       \item $m_E: E_r \rightarrow P_s$, is called the \textit{edge routing}, where $P_s$ is the set of loop-free paths of $\Graph_s$ and for each $\ebar = (\ubar,\vbar) \in E_r$, $m_E(\ebar)$ is a path of $\Graph_s$ whose endpoints are $m_V(\ubar)$ and $m_V(\vbar)$. 
   \end{itemize}
\end{definition}

In this work, we consider a \textit{one-to-one} node placement: each virtual node has to be placed on a different substrate node.
A substrate node $u \in V_s$ (resp. a substrate edge $e \in E_s$) has an integer capacity  $c_u$ (resp. $c_e$). 
Similarly, a virtual node $\ubar \in V_r$ (resp. virtual edge $\ebar \in E_r$) has an integer demand $d_{\ubar}$ (resp. $d_{\ebar}$).
A mapping is said to be \textit{feasible} when the capacity constraints are satisfied, i.e., the sum of the demands of virtual nodes (resp., edges) being placed on a substrate node (resp., being routed on a substrate edge) is less than or equal to the capacity of this node (resp., edge).

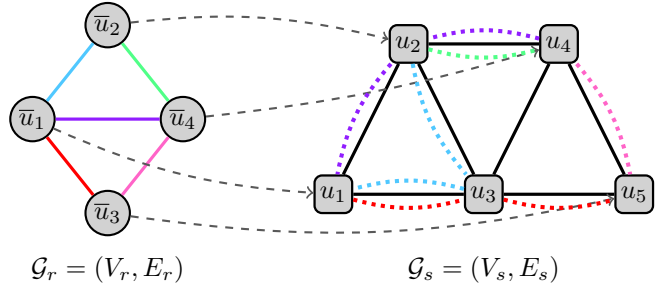
\begin{figure}
    \centering
    \begin{tikzpicture}
    \begin{scope}
        \node at (2,-2) {$\Graph_r = (V_r, E_r)$};
        \node[vnode] (v1) at (1,0){$\ubar_1$};
        \node[vnode] (v2) at (2,1.25){$\ubar_2$};
        \node[vnode] (v3) at (2,-1.25){$\ubar_3$};
        \node[vnode] (v4) at (3,0){$\ubar_4$};
    \end{scope}
    \draw[networkedge, blue] (v1) -- (v2);
    \draw[networkedge, purple] (v1) -- (v4);
    \draw[networkedge, green] (v2) -- (v4);
    \draw[networkedge, red] (v1) -- (v3);
    \draw[networkedge, pink] (v3) -- (v4);

    \begin{scope}[xshift=5.cm]
        \node at (2,-2) {$\Graph_s = (V_s, E_s)$};
        \node[snode] (u1) at (0,-1){$u_1$};
        \node[snode] (u2) at (1,1){$u_2$};
        \node[snode] (u3) at (2,-1){$u_3$};
        \node[snode] (u4) at (3,1){$u_4$};
        \node[snode] (u5) at (4,-1){$u_5$};
    \end{scope}
    \draw[networkedge] (u1) -- (u2);
    \draw[networkedge] (u1) -- (u3);
    \draw[networkedge] (u2) -- (u4);
    \draw[networkedge] (u2) -- (u3);
    \draw[networkedge] (u3) -- (u4);
    \draw[networkedge] (u4) -- (u5);
    \draw[networkedge] (u3) -- (u5);

    \draw[placementedge] (v1) to[bend right=10] (u1);
    \draw[placementedge] (v2) to[bend left = 10] (u2);
    \draw[placementedge] (v3) to[bend right= 10] (u5);
    \draw[placementedge] (v4) to[bend right = 5] (u4);

    \draw[routingedge, blue] (u2) to[bend right=15] (u3);
    \draw[routingedge, blue] (u3) to[bend right=15] (u1);
    \draw[routingedge, pink] (u4) to[bend left=15] (u5);
    \draw[routingedge, red] (u1) to[bend right=15] (u3);
    \draw[routingedge, red] (u3) to[bend right=15] (u5);
    \draw[routingedge, purple] (u4) to[bend right=15] (u2);
    \draw[routingedge, purple] (u2) to[bend right=15] (u1);
    \draw[routingedge, green] (u2) to[bend right=15] (u4);
        
    \end{tikzpicture}
    \caption{Example of a mapping of a virtual network (on the left) on a substrate network (on the right).}
    \label{fig:section1:example-vne}
\end{figure}

Figure \ref{fig:section1:example-vne} illustrates the embedding of a virtual graph on the left over a substrate graph on the right. The dotted arrows show the placement of virtual nodes. A virtual edge of a given color is routed using a substrate path of the same color. In this example, $m_V(\ubar_1) = u_1$, $m_V(\ubar_3) = u_5$ and $m_E(\ubar_1, \ubar_3) = \{(u_1, u_3), (u_3, u_5)\}$.

Using a unit of a substrate node (resp. edge) induces a utilization integer cost $w_u$ (resp. $w_e$). The cost of a mapping $m$, denoted $W_m$, corresponds to the sum of the placement and routing costs: $W_m = \sum_{\ubar \in V_r} d_{\ubar} w_{m_V(\ubar)} + \sum_{\ebar \in E_r} \sum_{e \in m_E(\ebar)} d_{\ebar} w_e$. The VNE can be formulated as follows:

\begin{definition}
    Given a virtual network $\Graph_r = (V_r, E_r)$ with demands $d$, a substrate network $\Graph_s = (V_s, E_s)$ with capacities $c$ and costs $w$, the Virtual Network Embedding Problem (VNE) is to find the minimum cost embedding of $\Graph_r$ on $\Graph_s$
\end{definition}

Introduced after the concept of network virtualization \cite{lu2006efficient, zhu2006algorithms}, the VNE is recognized as the central combinatorial problem behind resource allocation in network slicing \cite{vassilaras2017algorithmic}.

\subsection{Related works}

The Virtual Network Embedding problem is NP-complete \cite{amaldi2016computational, rost2020hardness}, even with uniform demands and topological restrictions \cite{benhamiche2025complexity}. A few special cases are known to be polynomial-time solvable, such as when the virtual network is a uniform-demand star \cite{benhamiche2025complexity, rost2015beyond}. 
Several VNE variants have been considered, such as path splitting for virtual edge routing, many-to-one node placement, directed graphs, or location constraints. 
A survey on many of these variants was conducted in 2013 \cite{fischer2013virtual}.
We now review the algorithms proposed to solve the VNE problem.

Most of the literature studies VNE in the online context, where small virtual networks arrive over time and must be embedded quickly, to maximize long-term revenue and acceptance rates \cite{chowdhury2011vineyard}. In that context, research has mostly focused on heuristics. 
Early works proposed greedy approaches \cite{lu2006efficient, yu2008rethinking}, later refined \cite{cheng2011virtual, gong2014toward, zhang2016virtual}, where first virtual nodes are assigned to substrate nodes based on ranking metrics (e.g., capacities, costs, node centrality, or clustering coefficient), and virtual edges are then routed using shortest-path or multi-commodity flow algorithms. 
To improve solutions obtained from those greedy heuristics, some authors have considered local search techniques \cite{infuhr2013solving, rguez2025grasp} and metaheuristics \cite{zhang2013unified,zhang2020vne,fajjari2011vne,zhang2019virtual,liu2016optimal}.
Another approach proposed in several works is to decompose the virtual network into smaller subgraphs, and compute mappings for each subgraph and then connect them to obtain a mapping of the whole virtual network \cite{lu2006efficient, zhu2016modified, song2019divide}. 
Literature has also considered AI-based methods, with the adaptation of Monte Carlo Tree Search \cite{haeri2017virtual}, Reinforcement Learning \cite{yao2018novel, dolati2019deepvine} or Graph Neural Network \cite{habibi2020accelerating, yan2020automatic} techniques for the VNE problem.

Fewer works have explored mathematical programming approaches.
The widely used integer linear formulation for the VNE is the Flow Formulation (FF), based on flow variables and flow conservation constraints \cite{chowdhury2009virtual, melo2013optimal}. 
In \cite{chowdhury2011vineyard}, it is used to devise the rounding heuristic ViNE, extensively used in VNE literature for comparison.
The formulation can be easily adapted to additional constraints, such as latency or location constraints \cite{melo2013optimal}.
However, it is known to present weak bounds, as discussed in \cite{moura2018branch, rost2019virtual}.
The Path Formulation (PF) yields tighter bounds by assigning directly virtual edges to substrate paths. Because the set of possible paths grows exponentially, its linear relaxation is commonly solved through Column Generation (CG) \cite{mijumbi2015path, moura2018branch}.
Building on this approach, \cite{mijumbi2015path} proposed a Price-and-Branch heuristic, while \cite{moura2018branch} introduced a full Branch-and-Price algorithm.
When embedding multiple virtual networks simultaneously, the Embedding Formulation (EF) has been proposed, where each variable represents the embedding of a whole virtual network.
Column generation is used to devise a Price-and-Branch heuristic in \cite{jarray2014decomposition} and a random rounding heuristic in \cite{rost2019parametrized}.

\subsection{Motivation and contributions}

Our work is motivated by two limitations in the current \vne literature in the context of Network Slicing.
First, VNE has almost exclusively been studied in the online setting, where requests arrive sequentially and must be embedded very quickly. 
However, many long-term slices, such as those supporting industrial applications or critical infrastructure, must remain in place for long periods of time. 
Hence, the allocation of resources in the core network should be carefully optimized.  
This motivates the study of offline VNE on core networks. 
Second, the recent implementations of country-wide slices \cite{Siemens2025water, Tmobile2025priority} indicate that real slices may span large portions of the infrastructure.
However, the literature of VNE considers small virtual networks (at most 20 nodes): to the best of our knowledge, only one work has tackled large virtual networks \cite{song2019divide}.
For such large virtual networks, state-of-the-art heuristics, based on greedy placement strategies, have poor performances and, when capacities are tight, often fail to find any feasible solutions.

We aim to propose algorithms that tackle the embedding of large, realistic virtual networks. To achieve this, we propose a decomposition approach based on a partitioning of the virtual network. The main contributions of this paper are the following:

\begin{itemize}
    \item We highlight the need for offline VNE in the context of long-term network slices. We introduce large instances based on real core network topologies.
    \item We introduce the Virtual Partition Formulation, an integer linear formulation based on a tuned partition of the virtual network.
    \item We solve the linear relaxation of this formulation with a column generation algorithm, to obtain lower bounds.
    \item We enhance the pricing scheme to propose an efficient Price-and-Branch heuristic.
\end{itemize}

The paper is organized as follows.
Section~2 introduces a new set of benchmark instances for VNE problems with large virtual networks.
Section~3 reviews the classical Flow Formulation and presents our new integer model.
Section~4 describes a column generation algorithm used to compute strong lower bounds, which are shown experimentally to outperform those obtained with the Flow Formulation.
Section~5 discusses refinements of the pricing scheme that produce columns suitable for constructing high-quality solutions. Based on these developments, we propose a Price-and-Branch heuristic that outperforms state-of-the-art algorithms.
Finally, Section~6 concludes the paper.

\section{Large and realistic instances}\label{sec:instances}

\begin{table*}
\centering
\begin{tabular*}{\textwidth}{@{\extracolsep\fill}llllllll@{}}\toprule
\multicolumn{4}{c}{\textbf{Virtual Networks}} & \multicolumn{4}{c}{\textbf{Substrate Networks}} \\ \cmidrule{1-4} \cmidrule{5-8}
\textbf{Name} & $|V_r|$ & $|E_r|$ & \textbf{Origin} & \textbf{Name} & $|V_s|$ & $|E_s|$ & \textbf{Origin} \\ \midrule
Iris & 51 & 64 & TopologyZoo & TataNdl & 145 & 186 & TopologyZoo \\
zib54 & 54 & 80 & SDN-Lib & GtsCe & 149 & 193 & TopologyZoo \\
HiberniaGlobal & 55 & 81 & TopologyZoo & Cogentco & 197 & 243 & TopologyZoo \\
UsSignal & 61 & 77 & TopologyZoo & & & & \\
ta2 & 65 & 108 & SDN-Lib & & & & \\
Tw & 71 & 115 & TopologyZoo & & & & \\
Intellifiber & 73 & 95 & TopologyZoo & & & & \\
Uninett2010 & 74 & 101 & TopologyZoo & & & & \\
OteGlobe & 83 & 90 & TopologyZoo & & & & \\
Interoute & 110 & 148 & TopologyZoo & & & & \\
\bottomrule
\end{tabular*}
\caption{List of virtual and substrate network instances}
\label{table:instances}
\end{table*}

Since network virtualization has not yet reached widespread commercial deployment, no practical benchmark of instances exists. This has led previous works to generate random virtual topologies and demands, as well as substrate topologies, capacities, and costs \cite{zhu2006algorithms, chowdhury2011vineyard, zhang2017virtual, gong2014toward, zhang2013unified, melo2013optimal}. This over-reliance on randomness makes the instances hard to reproduce and arguably unrealistic. Moreover, the difficulty of an instance, which depends on topologies, demands and capacities, is hard to control. Instead, we propose a new benchmark of reusable and realistic offline VNE instances. These benchmark instances are publicly available at \cite{schneider2025vnelib} for use in future studies. We consider large virtual networks, since in practice slices may span large portions of the physical networks (e.g. \cite{Siemens2025water, Tmobile2025priority}). The generation process, described in detail in \cite{schneider2025vnelib}, is the following.

\paragraph{Topologies:} 
In the literature, the standard experimental configuration used relies on small (less than 20 nodes) virtual networks, randomly generated with the GT-ITM tool \cite{zegura1996model}. Substrate networks (between 100 and 200 nodes) are also generated randomly from GT-ITM. Since the networks generated by GT-ITM are whole networks (core and edge) and not only core ones, we will use instead real large-scale backbone topologies, as resource allocation decisions have the largest impact on long-term investment costs.
Those topologies are taken from the Internet Topology Zoo \cite{knight2011internet} and the Survivable Network Design Library (SNDlib) \cite{orlowski2010sndlib}, two widely-used collections in network optimization. These datasets provide Points-of-Presence level representations of backbones from real Internet Infrastructure Providers, at the scale of a country or continent. We selected the largest topologies available in each dataset, which are summarized in Table~\ref{table:instances}.

\paragraph{Demands:} 
Generating realistic demand patterns remains a challenging task, although some attempts have been made in this direction by considering traffic matrices \cite{lu2006efficient, da2020impact}. 
This leads us to consider, for this study, the simple setting of assigning unit demands to all virtual nodes and edges. This configuration preserves the fundamental simultaneous placement and routing structure of VNE, while removing packing effects that are not central to its difficulty. The insights obtained for this setting remain representative of the general VNE problem. In fact, unit demands can be harder to solve in practice than non-uniform demands, as packing aspects are tackled well by commercial solvers such as CPLEX or Gurobi. \\

\paragraph{Capacities and costs:}
Substrate nodes and edges capacities are generated randomly, but according to three types: large, medium, and small, as follows. Large capacities are high enough that routing feasibility is never an issue, and any node placement will induce a solution. These instances allow us to evaluate that algorithms can find good solutions when feasibility is not an issue. For medium capacities, edge capacities are much more restraining, such that edge routing becomes difficult. In small capacities, some substrate nodes (20\%) have null capacity, while the edge capacities are even more tight, such that embedding on these substrate networks should be very challenging.
Costs are defined following a common principle in real networks: resources with higher capacity and connectivity are cheaper to use, thereby encouraging load balancing.
The exact generation of capacities and costs is detailed in \cite{schneider2025vnelib}.
Note that capacities are generated in such a way that every feasible solution for a small-capacity instance is also feasible (with the same cost) in the corresponding medium instance, and likewise for medium vs. large. Hence, the optimal value for small (resp. medium) capacities is always greater than or equal to that of the corresponding medium (resp. large) instance.

\section{A new formulation based on virtual graph partitioning}
\label{sec:partitioning_formulation}

In this section, we present the Flow Formulation used for VNE. We discuss the weaknesses of this formulation, leading us to introduce a new formulation based on a partition of the virtual network.
In the following, we denote $v^*$ the optimal value of a given instance. For an integer linear formulation $(F)$, its linear relaxation, obtained by relaxing the integrality constraints, is denoted as $(\widetilde{F})$, and the optimal value of this relaxation as $v(\widetilde{F})$.

\subsection{The (undirected) Flow Formulation}

In what follows, we introduce the Flow Formulation for the undirected case. To the best of our knowledge, this formulation has been proposed only for directed graphs \cite{chowdhury2009virtual, melo2013optimal}. 
To apply flow conservation constraints, we consider that the edges of $E_r$ and $E_s$ are an ordered set, which is equivalent to giving them an arbitrary direction.
A virtual edge can traverse a substrate edge $(u,v) \in E_s$ in both direction $(u,v)$ or $(v,u)$, since the network is undirected.
To capture this, we introduce bidirected flow variables: for each substrate edge, we define two variables, one per orientation (original and reverse). 
This technique is standard in flow problems \cite{ahuja1988network} or network design problems \cite{raack2011cut}.

Two sets of binary variables are considered. The placement variable $x_{\ubar u}$ takes the value 1 if and only if the virtual node $\ubar \in V_r$ is placed on substrate node $u \in V_s$. The flow variable $y_{\ebar e_+}$ (resp. $y_{\ebar e_-}$) takes the value 1 if and only if the path routing virtual edge $\ebar \in E_r$ uses substrate edge $(u,v) = e \in E_s$, from $u$ to $v$ (resp. from $v$ to $u$). For a (directed) edge $e$, its start (resp. terminal) node is denoted $s(e)$ (resp. $t(e)$). 
For a node $u \in V_s$, its adjacent edges are denoted $\delta(u)$. Additionally, we denote $\delta^+(u) = \{e_+:s(e)=u\} \cup \{ e_-: t(e)=u\}$, which corresponds to the outgoing arcs of the node $u$ on the bidirected version of the substrate network. 
The Flow Formulation $(FF)$ is the following.

\begin{subequations}
\small
\begin{align}
    \min \quad & \sum_{u \in V_s} \sum_{\ubar \in V_r} d_{\ubar} w_u  x_{\ubar u} + \sum_{e \in E_s} \sum_{\ebar \in E_r} d_{\ebar} w_e (y_{\ebar e_+} + y_{\ebar e_-}) \notag \\
    \text{s.t.} \quad 
    & \sum_{u \in V_s} x_{\ubar u} = 1 && \forall \ubar \in V_r  \label{ff1}\\
    & x_{s(\ebar) u} - x_{t(\ebar) u} = \sum_{e \in \delta(u)} y_{\ebar e_+} - y_{\ebar e_-} && \forall \ebar \in E_r,\ \forall u \in V_s \label{ff2} \\
    & \sum_{\ubar \in V_r} x_{\ubar u} \leq 1 && \forall u \in V_s \label{ff1t1} \\
    & \sum_{\ubar \in V_r} d_{\ubar} x_{\ubar u} \leq c_u && \forall u \in V_s \label{ff3} \\
    & \sum_{\ebar \in E_r} d_{\ebar} (y_{\ebar e_+} + y_{\ebar e_-}) \leq c_e && \forall e \in E_s \label{ff4} \\
    & \sum_{e : s(e) = u} y_{\ebar e} \geq x_{s(\ebar) u} && \forall \ebar \in E_r, \forall u \in V_s \label{ff5} \\
    & x_{\ubar u} \in \{0,1\} && \forall \ubar \in V_r, \forall u \in V_s,  \nonumber \\
    & y_{\ebar e_+},  y_{\ebar e_-} \in \{0,1\} && \forall \ebar \in E_r, \forall e \in E_s \nonumber
\end{align}
\end{subequations}

The objective function expresses an embedding cost taking into account placement and routing costs.
The constraints (\ref{ff1}) ensure valid virtual nodes placement. The constraints (\ref{ff2}) are the flow conservation constraints and guarantee that each virtual edge is associated with a routing path in the substrate graph whose end nodes host its endpoints. 
The one-to-one node placement is enforced by constraints (\ref{ff1t1}).
Constraints (\ref{ff3}) (resp. (\ref{ff4})) ensure that node (resp. edge) capacities are respected.  
Finally, the Constraints (\ref{ff5}) are the flow departure constraints: they are new valid inequalities for the VNE in the one-to-one node placement case. They permit to greatly reinforce the linear relaxation. They translate that, when a virtual node is placed on a substrate node, adjacent edges to the virtual node are routed on adjacent edges to this substrate node.

\subsection{The new Virtual Partition Formulation}

Unfortunately, the Flow Formulation suffers from poor linear relaxation \cite{rost2019virtual, moura2018branch}, which we have observed in our preliminary tests, even with the flow departure constraints. To obtain better lower bounds, the Path Formulation has been proposed \cite{mijumbi2015path, moura2018branch}, where variables indicate whether a virtual edge is routed using a given substrate path. 
Since there is an exponential number of variables, the linear relaxation is solved by column generation (CG) \cite{desrosiers2005primer}. However, in our early tests, the Path Formulation only proposed a slightly better linear relaxation than the Flow Formulation reinforced with flow departure constraints, at the cost of longer running times.

Our idea is that, to get a better linear relaxation, a decomposition should consider a larger sub-component than a single virtual edge: a \textit{virtual subgraph} of several connected nodes and edges. 
This is achieved through a partition of the virtual network into smaller subgraphs, which can be embedded efficiently through existing methods. 
This idea has practical relevance: it can be expected that the virtual network is composed of cohesive regions, e.g., part of a slice can be dedicated to a specific service or location, which can be tackled with dedicated routines.
Solving VNE through a partition of the virtual network into stars \cite{lu2006efficient} or cycles and paths \cite{zhu2016modified} has already been studied in a few heuristics \cite{song2019divide, lu2006efficient, zhu2016modified}. 
In this work, we propose a formal framework, based on a new integer linear formulation for \vne: the \vpflong.

Given a subset of virtual nodes $V'_r \subseteq V_r$, the \textit{induced subgraph} of $\Graph_r$ on $V'_r$ is the graph $\Graph'_r = (V'_r, E'_r)$, where $E'_r = \{ (u, v) \in E_r : u, v \in V'_r\}$. In the following, we will often omit the word \textit{induced}.
Now consider a partition of the virtual nodes of $\Graph_r$ into $1 \leq k_r \leq n_r$ subsets $\pi = (V_r^1, \ldots, V_r^{k_r})$, such that $V_r = \cup_{i \in \{1, \ldots, k\}} V^i_r$ (every node belongs to exactly one subset) and $|V^i_r| \geq 1$ for $i \in \{1, \ldots, k\}$.
For each part $V_r^i$, we denote the corresponding induced subgraph by $\GraphH_r^i$. The subgraph containing a node $\ubar \in V_r$ is denoted $\GraphH_r(\ubar)$.
Edges that connect nodes in different subsets do not belong to any subgraph $\GraphH_r^i$. We denote this set of \textit{cut edges} by $E_r^0$. 
The set of feasible sub-mappings of a subgraph $\GraphH_r^i$ is denoted $\Mapping_i$. 
The cost of a sub-mapping $m \in \Mapping_i$ is denoted $w_m$. $\textsc{x}^m_{\ubar u}$ is equal to one iff $\ubar$ is placed on $u$ in $m$, $\textsc{y}^m_{\ebar e_+}$ (resp. $\textsc{y}^m_{\ebar e_-}$) is equal to one iff the path routing $\ebar$ uses substrate edge $(u,v) = e \in E_s$, from $u$ to $v$ (resp. from $v$ to $u$) in $m$.  

Two sets of binary variables are considered. The \textit{sub-mapping} variables, denoted $\lambda^i_m$, take the value 1 if and only if the mapping $m \in \Mapping_i$ is selected for subgraph $\GraphH^i_r$, $i \in \{1, \ldots, k\}$.
Flow variable $y_{\ebar e_+}$ (resp. $y_{\ebar e_-}$) takes the value 1 if and only if the path routing virtual cut edge $\ebar \in E_r^0$ uses substrate edge $(u,v) = e \in E_s$, from $u$ to $v$ (resp. from $v$ to $u$).
The Virtual Partition Formulation for the partitioning $\pi$, denoted $VPF_\pi$, is the following.

\begin{subequations}
\small
    \begin{align}
    \min \quad	& \sum_{i \in \{1, \ldots, k_r\}} \sum_{m \in \Mapping_i}  w_m \lambda_m^i + \sum_{e \in E_s} \sum_{\ebar \in E_r^0} w_e d_{\ebar} (y_{\ebar e_-} + y_{\ebar e_+}) \nonumber \\
    \text{s.t.}	\quad & \sum_{m \in \Mapping_i} \lambda_m^i \geq 1 && \forall i \in \{1, \ldots, k_r\} \label{VPF1} \\
    & \sum_{ \substack{m \in \Mapping_i, \\ s(\ebar) \in V_r^i} } \textsc{x}^m_{s(\ebar) u} \lambda_m^i  - \sum_{ \substack{m \in \Mapping_i, \\ t(\ebar) \in V_r^i}} \textsc{x}^m_{t(\ebar)u} \lambda_m^i =  \sum_{e \in \delta(u)} y_{\ebar e_+} - y_{\ebar e_-} && \forall \ebar \in E_r^0, \forall u \in V_s \label{VPF2}  \\
    & \sum_{i \in \{1, \ldots, k_r\}} \sum_{m \in \Mapping_i} \sum_{\ubar \in V_r^i}   \textsc{x}^m_{\ubar u} \lambda_m^i  \leq 1 && \forall u \in V_s  \label{VPF1t1} \\
    & \sum_{i \in \{1, \ldots, k_r\}} \sum_{m \in \Mapping_i} \sum_{\ubar \in V_r^i}  d_{\ubar} \textsc{x}^m_{\ubar u} \lambda_m^i  \leq c_u && \forall u \in V_s  \label{VPF3} \\
    &  \sum_{i \in \{1, \ldots, k_r\}} \sum_{m \in \Mapping_i} \sum_{\ebar \in E^i_r} d_{\ebar} (\textsc{y}^m_{\ebar e_-} +\textsc{y}^m_{\ebar e_+} ) \lambda_m^i  + \sum_{\ebar \in E_r^0}  d_{\ebar} (y_{\ebar e_+} + y_{\ebar e_-})  \leq c_e	&& \forall e \in E_s  \label{VPF4} \\
    & \sum_{\substack{m \in \Mapping_i, \\ s(\ebar) \in V_r^i}} \textsc{x}^m_{s(\ebar) u} \lambda^i_m \le \sum_{e \in \delta^+(u)} y_{\ebar e} && \forall \ebar \in E_r^0, \forall u \in V_s   \label{VPF5}  \\
    & \lambda_m^i \in \{0,1\} && \forall m \in \Mapping_i,\, \forall i \in \{1, \ldots, k_r\} \nonumber \\
    & y_{\ebar e_+},  y_{\ebar e_-} \in \{0,1\} && \forall \ebar \in E_r, \forall e \in E_s \nonumber
\end{align}
\end{subequations}

Constraints (\ref{VPF1}) are the convexity constraints of the formulation, ensuring that one mapping is selected for each subgraph. Constraints (\ref{VPF2}) (resp. (\ref{VPF3})) are the flow conservation (resp. flow departure) constraints for the virtual cut edges. The one-to-one node placement is enforced by constraints (\ref{VPF1t1}). Inequalities (\ref{VPF4}) (resp. (\ref{VPF5})) ensure that substrate node (resp. edge) capacities are respected.

\subsection{Formulations comparison}

This decomposition is quite unusual, as columns have the same structure (a feasible mapping for a virtual network) as the solution for the overall problem. Such a decomposition is similar to what is done in \cite{caprara2016solving}, on a Temporal Knapsack problem, where it is coined as a \textit{Recursive Dantzig-Wolfe Decomposition}.
Moreover, $(VPF)$ has a similar structure to $(FF)$: the flow conservation and flow departure constraints remain for the virtual cut edges, whereas placement of nodes is replaced by the mapping of a subgraph of the virtual network. 
If a subgraph corresponds to a single virtual node, then the sub-mapping variable corresponds to a classical placement variable from the flow formulation. In that case, $(VPF)$ and $(FF)$ are the same formulation. On the other hand, if $k_r = 1$, then the only subgraph is the entire virtual network and $v(\vpfrelax) = v^*$. In what follows, we will study a few properties of $(VPF)$, using the instance from Example~\ref{ex:vpf-props} for illustration.

\begin{myexample} \label{ex:vpf-props}
    Figure~\ref{fig:vpf-vs-ff} provides an instance of VNE. The virtual network, on the left, has unit demands on all nodes and edges. The substrate network, on the right, has unit capacities on all nodes and edges, with zero cost on nodes and unit cost on edges. For this instance,  $v(\ffrelax) = |E_r|$.
    
    \begin{figure}[h]
    \centering
    \begin{tikzpicture}
    \begin{scope}[xshift=-2.25cm]
        \node at (0,-3.5) {$\Graph_r$};
        \node[vnode] (v1) at (0,0.65){$\ubar_1$};
        \node[vnode] (v2) at (-0.75,1.55){$\ubar_2$};
        \node[vnode] (v3) at (0.75,1.55){$\ubar_3$};
        \node[vnode] (v4) at (0,-0.65){$\ubar_4$};
        \node[vnode] (v5) at (-0.75,-1.55){$\ubar_5$};
        \node[vnode] (v6) at (0.75, -1.55){$\ubar_6$};
        
        \draw[networkedge] (v1) -- (v2);
        \draw[networkedge] (v2) -- (v3);
        \draw[networkedge] (v3) -- (v1);
        \draw[networkedge] (v1) -- (v4);
        \draw[networkedge] (v4) -- (v5);
        \draw[networkedge] (v5) -- (v6);
        \draw[networkedge] (v6) -- (v4);
    \end{scope}

    \begin{scope}[xshift=2.25cm]
        \node at (0,-3.5) {$\Graph_s$};
        \node[snode] (u1) at (0,0.75){$u_1$};
        \node[snode] (u2) at (-1, 1.75){$u_2$};
        \node[snode] (u3) at (0,2.75){$u_3$};
        \node[snode] (u4) at (1, 1.75){$u_4$};
        \node[snode] (u5) at (0,-0.75){$u_5$};
        \node[snode] (u6) at (-1,-1.75){$u_6$};
        \node[snode] (u7) at (0,-2.75){$u_7$};
        \node[snode] (u8) at (1,-1.75){$u_8$};
        
        \draw[networkedge] (u1) -- (u2);
        \draw[networkedge] (u2) -- (u3);
        \draw[networkedge] (u3) -- (u4);
        \draw[networkedge] (u4) -- (u1);
        \draw[networkedge] (u5) -- (u6);
        \draw[networkedge] (u6) -- (u7);
        \draw[networkedge] (u7) -- (u8);
        \draw[networkedge] (u8) -- (u5);
        \draw[networkedge] (u1) -- (u5);
    \end{scope}

    \end{tikzpicture}
    \caption{A VNE instance}
    \label{fig:vpf-vs-ff}
    \end{figure}
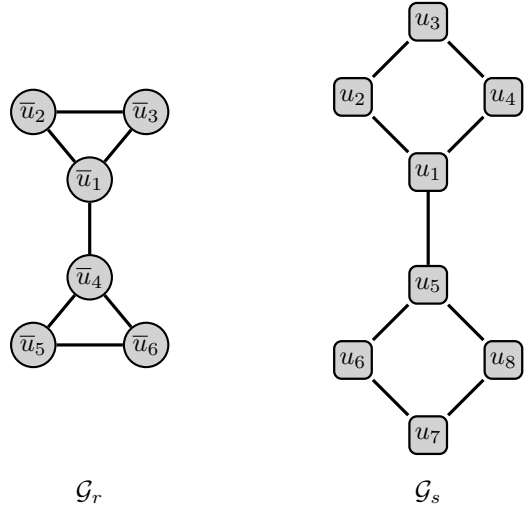
\end{myexample}

First, let us show that our new formulation has a better linear relaxation than the \fflong.

\begin{proposition}
    $v(\ffrelax) \leq v(\vpfrelax_\pi)$, for any partition $\pi$ of the virtual network, and the gap between the two can be arbitrarily large. 
\end{proposition}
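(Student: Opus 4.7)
The plan is to prove $v(\ffrelax) \le v(\vpfrelax_\pi)$ by a Dantzig--Wolfe projection that sends any feasible $(\vpfrelax_\pi)$-solution to an equally costly $(\ffrelax)$-solution, and then to construct a family of instances with unbounded gap. Let $(\lambda,y)$ be feasible for $(\vpfrelax_\pi)$; since all costs are non-negative we may assume $\sum_{m \in \Mapping_i}\lambda^i_m = 1$ for every part $i$. I would define $(\bar x,\bar y)$ by
\[
\bar x_{\ubar u} := \sum_{m \in \Mapping_{i(\ubar)}} \textsc{x}^m_{\ubar u}\lambda^{i(\ubar)}_m,\qquad \bar y_{\ebar e_\pm} := \sum_{m \in \Mapping_i} \textsc{y}^m_{\ebar e_\pm}\lambda^i_m \ \text{ for } \ebar \in E_r^i,
\]
and $\bar y_{\ebar e_\pm} := y_{\ebar e_\pm}$ for each cut edge $\ebar \in E_r^0$, where $i(\ubar)$ denotes the part of $\pi$ containing $\ubar$.

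The bulk of the work is then to verify each constraint of $(\ffrelax)$ as a linear consequence of its $(\vpfrelax_\pi)$-counterpart: (\ref{ff1}) follows from $\sum_u \textsc{x}^m_{\ubar u} = 1$ in each sub-mapping and the normalization; (\ref{ff2}) is exactly (\ref{VPF2}) on cut edges and a convex combination of the per-$m$ flow conservation identities on non-cut edges; (\ref{ff1t1}), (\ref{ff3}), (\ref{ff4}) follow by commuting aggregation with summation in (\ref{VPF1t1}), (\ref{VPF3}), (\ref{VPF4}); and (\ref{ff5}) splits along the same cut/non-cut dichotomy using (\ref{VPF5}) for cut edges and the integer feasibility of each $m$ otherwise. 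By linearity, the objective values of $(\lambda,y)$ and $(\bar x,\bar y)$ coincide, which yields the inequality.

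For the unboundedness of the gap, I would use Example~\ref{ex:vpf-props} together with a chaining generalization. Partitioning the virtual graph of the example into its two triangles, and observing that the substrate is triangle-free, every $m \in \Mapping_i$ must route at least one triangle edge on a substrate path of length $\ge 2$, so $w_m \ge 4$; combined with (\ref{VPF1}) and the unit routing cost of the cut edge, this gives $v(\vpfrelax_\pi) \ge 2\cdot 4 + 1 = 9 > 7 = v(\ffrelax)$. To amplify the gap, I would chain $k$ such virtual triangles by single bridge edges and embed them on a chain of $k$ substrate $4$-cycles joined by bridges: the same triangle-free argument gives $v(\vpfrelax_\pi) \ge 4k + (k-1)$ under the triangle partition, while the fractional placement underlying the example extends to this chain and yields $v(\ffrelax) = |E_r| = 4k-1$, so the gap is at least $k$ and hence unbounded. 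The main subtle point is the projection itself: constraints (\ref{VPF2}) and (\ref{VPF5}) must be written with the explicit $\textsc{x}^m_{\ubar u}$ coefficients precisely so that after aggregation they collapse onto (\ref{ff2}) and (\ref{ff5}) for cut edges, where the coupling between $\lambda$- and $y$-variables is the non-trivial piece.
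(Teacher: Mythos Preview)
Your projection argument is exactly the construction the paper uses; you simply spell out the constraint-by-constraint verification that the paper leaves implicit. The one genuine difference is in how you amplify the gap: the paper keeps the virtual graph of Example~\ref{ex:vpf-props} fixed (two triangles plus a bridge) and enlarges each substrate square to an $L$-cycle, so that every triangle sub-mapping costs at least $L$ while $v(\ffrelax)=|E_r|=7$, giving a gap growing like $2L$. Your alternative---chaining $k$ copies of the basic instance and using the triangle partition---is equally valid; it trades a fixed small virtual graph for a fixed local girth bound of $4$, and your count $v(\vpfrelax_\pi)\ge 4k+(k-1)$ versus $v(\ffrelax)=4k-1$ is correct (the uniform fractional placement you allude to does extend to the chain).

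One small caveat: your normalization ``since all costs are non-negative we may assume $\sum_{m}\lambda^i_m=1$'' is not entirely free, because (\ref{VPF2}) is an equality coupling two different parts, so rescaling the $\lambda$'s of a single part is not a feasible move. This can be patched (e.g.\ relax (\ref{ff1}) to $\ge 1$ on the $(\ffrelax)$ side as well, which does not change its value), but the paper's proof is equally silent on this point, so your argument is at the same level of rigor.
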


\begin{proof}
    Consider a $k_r$ partition $\pi$ of the virtual network.
    Given a fractional solution $(\lambda, y)$ to $(\vpfrelax_\pi)$, a valid solution $(x, y')$ to $(\ffrelax)$ can be built as follows:
    
    \begin{itemize}
        \item For $i \in \{1, \ldots, k_r\}$, for $\ubar \in V_r^i$, set $x_{\ubar u} = \sum_{m \in \Mapping_i} \textsc{x}^m_{\ubar u} \lambda^i_m$
        \item For $i \in \{1, \ldots, k_r\}$, for $\ebar \in E_r^i$, set $y_{\ebar e_-} = \sum_{m \in \Mapping_i} \textsc{y}^m_{\ebar e_-} \lambda^i_m$ and $y_{\ebar e_+} = \sum_{m \in \Mapping_i} \textsc{y}^m_{\ebar e_+} \lambda^i_m$
        \item For $\ebar \in E_r^0$, set $y'_{\ebar e_-} = y_{\ebar e_-}$ and $y'_{\ebar e_+} = y_{\ebar e_+}$ for any $e \in E_s$.
    \end{itemize}
    
    Thus $v(\ffrelax) \leq v(\vpfrelax_{\pi})$. Now, consider the instance of Example~\ref{ex:vpf-props} and the partitioning of the virtual network in the two triangles $ \pi = \{ \{ \ubar_1, \ubar_2, \ubar_3\}, \{ \ubar_4, \ubar_5, \ubar_6 \} \}$
    Any sub-mapping of one of the two virtual triangle subgraphs has a cost of 4, since they have to be embedded on one of the substrate squares. Thus $v(\vpfrelax) = |E_s| > v(\ffrelax)$. 
    By considering larger substrate cycles instead of the two squares, the gap between the two relaxations can be increased to an arbitrarily large value.
\end{proof}

We will now prove a stronger result. A partition $\pi'$ is a \textit{refinement} of $\pi$ if every part of $\pi'$ is included in some part of $\pi$.

\begin{proposition}
    Let $\pi$ and $\pi'$ be two partitions of $V_r$ such that $\pi'$ is a refinement of $\pi$. Then $v(\vpfrelax_{\pi}) \ge v(\vpfrelax_{\pi'})$, and the gap between the two can be arbitrarily large. 
\end{proposition}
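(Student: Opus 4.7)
The plan is to mimic the construction used in the previous proposition, but at the level of parts rather than single nodes: given any feasible fractional solution to $(\widetilde{VPF}_\pi)$, I will build a feasible fractional solution to $(\widetilde{VPF}_{\pi'})$ of equal cost, and then reuse the example from Example~\ref{ex:vpf-props} to exhibit an arbitrarily large gap.

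Since $\pi'$ refines $\pi$, each part $V_r^i$ of $\pi$ is the disjoint union of some parts $V_r^{i,1}, \ldots, V_r^{i,k_i}$ of $\pi'$. Consequently, the induced subgraph $\widehat{\Graph}_r^i$ is the disjoint union of the subgraphs $\widehat{\Graph}_r^{i,j}$ together with the edges of $\widehat{\Graph}_r^i$ whose endpoints lie in two different parts of $\pi'$; call this latter set $E_r^{i,0}$. Every feasible sub-mapping $m \in \Mapping_i$ for $(\pi)$ naturally restricts, for each $j$, to a feasible sub-mapping $m^{i,j} \in \Mapping_{i,j}$ for $(\pi')$ (the placement variables $\textsc{x}^m_{\ubar u}$ and the flow variables $\textsc{y}^m_{\ebar e_\pm}$ for $\ebar \in E_r^{i,j}$ are simply copied), and the remaining $\textsc{y}^m_{\ebar e_\pm}$ values for $\ebar \in E_r^{i,0}$ describe paths between the placements of endpoints now lying in different parts of $\pi'$.

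Given $(\lambda,y)$ feasible for $(\widetilde{VPF}_\pi)$, I set, for all $i, j, u, e$:
\begin{align*}
\lambda'^{i,j}_{m^{i,j}} &\;=\; \sum_{m \in \Mapping_i} \lambda_m^i \cdot \mathbb{1}[\text{restriction of } m \text{ to } V_r^{i,j} \text{ is } m^{i,j}], \\
y'_{\ebar e_\pm} &\;=\; \sum_{m \in \Mapping_i} \textsc{y}^m_{\ebar e_\pm}\, \lambda_m^i \quad \text{for } \ebar \in E_r^{i,0}, \\
y'_{\ebar e_\pm} &\;=\; y_{\ebar e_\pm} \quad \text{for } \ebar \in E_r^0.
\end{align*}
The convexity constraints (\ref{VPF1}) in $(\widetilde{VPF}_{\pi'})$ follow directly from those in $(\widetilde{VPF}_\pi)$, since the restrictions of sub-mappings of $V_r^i$ to $V_r^{i,j}$ partition the mass of $\lambda^i$. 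The one-to-one and node-capacity constraints (\ref{VPF1t1})--(\ref{VPF3}) are preserved because the sums over $\ubar \in V_r^i$ split exactly into sums over the $V_r^{i,j}$'s. For flow conservation (\ref{VPF2}) and flow departure (\ref{VPF5}) on edges $\ebar \in E_r^{i,0}$, these were encoded inside the definition of each feasible $m \in \Mapping_i$, and taking the convex combination weighted by $\lambda_m^i$ yields exactly the required identities for $y'$. Edge capacity (\ref{VPF4}) is preserved because the internal edge-demand contribution of $\ebar \in E_r^{i,0}$ in $(\widetilde{VPF}_\pi)$ (carried by the sub-mapping terms) is transferred, with the same coefficients, to the explicit cut-edge term in $(\widetilde{VPF}_{\pi'})$. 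The objective is preserved for the same reason: the cost $w_m$ of any $m$ decomposes as the sum of the sub-mapping costs of its restrictions plus the routing cost of its $E_r^{i,0}$ edges. Hence $v(\widetilde{VPF}_\pi) \ge v(\widetilde{VPF}_{\pi'})$.

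For the arbitrariness of the gap, reuse the instance of Example~\ref{ex:vpf-props}: take $\pi$ to be the two-triangle partition and $\pi'$ the finest partition into singletons. As noted earlier, the singleton case coincides with $(\widetilde{FF})$, so $v(\widetilde{VPF}_{\pi'}) = v(\widetilde{FF}) = |E_r|$, whereas $v(\widetilde{VPF}_\pi) = |E_s|$. Enlarging the two substrate cycles (while keeping the virtual triangles) drives $|E_s|$ arbitrarily high while $|E_r|$ stays constant, so $v(\widetilde{VPF}_\pi) - v(\widetilde{VPF}_{\pi'})$ can be made arbitrarily large. The main technical point to be careful about is verifying that the flow conservation and flow departure constraints for the edges of $E_r^{i,0}$ genuinely hold after the lift; this is precisely the place where the feasibility of each individual $m \in \Mapping_i$ is essential, because it guarantees that the internal flow pattern already matches the placements dictated by its two restrictions $m^{i,j}$ and $m^{i,j'}$.
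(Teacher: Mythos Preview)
Your proof is correct and follows essentially the same approach as the paper: restrict each sub-mapping of the coarser partition to the parts of the refinement, transfer the corresponding $\lambda$-mass, and let the routing of edges that become cut edges in $\pi'$ be inherited from the internal routing of the coarser sub-mappings; the arbitrarily large gap is then exhibited with the very same instance of Example~\ref{ex:vpf-props}. Your write-up is in fact more careful than the paper's: you explicitly sum the $\lambda$-values over all $m$ restricting to the same $m^{i,j}$, and you spell out how the new cut edges $E_r^{i,0}$ acquire their $y'$ values and why constraints (\ref{VPF2})--(\ref{VPF5}) are preserved, points the paper leaves implicit.
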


\begin{proof}
    We note $\pi = \{V_r^1, \ldots, V_r^{k_r}\}$ and $\pi' = \{{V'}_r^1, \ldots, {V'}_r^{k'_r}\}$, with $k_r \ge k'_r$.
    For $i \in \{1, \ldots k'_r\}$, let $\phi(i)$ denote the index of the part $V_r^{\phi(i)}$ containing ${V'}_r^i$.
    Given a solution $(\lambda, y)$ to $(\vpfrelax_{\pi})$, a solution $(\lambda', y')$ to $(\vpfrelax_{\pi'})$ of same costs can be built as follows.

   \begin{itemize}
        \item For $i \in \{1, \ldots, k'_r\}$, for each sub-mapping $m \in \Mapping_{\phi(i)}$ of $\GraphH_r^{\phi(i)}$, a sub-mapping $m'$ for $\GraphH'^i$ can be constructed by considering the node placement and edge routing of nodes and edges of $\GraphH^i$ in $m$. 
        Set ${\lambda'}^i_{m'} = \lambda^{\phi(i)}_m$.
        \item For $\ebar \in E_r^0$, simply set $y'_{\ebar e_-} = y_{\ebar e_-}$ and $y'_{\ebar e_+} = y_{\ebar e_+}$.
   \end{itemize}
    
    Thus $v(\vpfrelax_{\pi}) \ge v(\vpfrelax_{\pi'})$.
    Now, consider again the instance of Example~\ref{ex:vpf-props}. With $\pi = \{ \{ \ubar_1, \ubar_2, \ubar_3\}, \{ \ubar_4, \ubar_5, \ubar_6 \} \}$ and $\pi' = \{ \{\ubar_1\}, \{\ubar_2\}, \{\ubar_3\}, \{\ubar_4\}, \{\ubar_5\}, \{\ubar_6\} \}$, $\pi'$ is a refinement of $\pi$. As shown earlier $v(\vpfrelax_{\pi}) = |E_s| > v(\ffrelax) = v(\vpfrelax_{\pi'}) = |E_r|$, and the gap can be increased to an arbitrarily large value with larger substrate cycles.
\end{proof}

This shows that using a smaller $k_r$ will typically lead to better relaxation values. However, it is not always the case:

\begin{remark}
    Let $\pi$ and $\pi'$ be two partitions of $V_r$ of sizes respectively $k_r$ and $k'_r$, with $k_r > k'_r$.
    In some cases, $v(\vpfrelax_{\pi}) > v(\vpfrelax_{\pi'})$.
\end{remark}

\begin{proof}
    Consider again the instance of Example~\ref{ex:vpf-props} and the partitions $\pi = \{ \{ \ubar_1, \ubar_2, \ubar_3\}, \{\ubar_4\}, \{\ubar_5\}, \{\ubar_6\}  \}$ and $\pi' = \{ \{ \ubar_1, \ubar_2 \} \{ \ubar_3, \ubar_4\}, \{\ubar_5, \ubar_6\} \}$. Although $\pi$ has more parts, $v(\vpfrelax_{\pi}) > v(\vpfrelax_{\pi'})$.
\end{proof}

In this example, a better bound is obtained with a denser (cyclic) subgraph, rather than larger (in average) path subgraphs.

\section{Column generation and lower bounds}

In this section, we describe our column generation implementation to solve the linear relaxation $(\vpfrelax)$ and compute strong lower bounds.

\subsection{Automatic partition of the virtual network}

First, we propose a partitioning strategy. The quality of the virtual network partition is crucial for the efficiency of our algorithm, as suggested by the observations in Section~3.3. We opt for two widely used automatic graph partitioners, KaHIP \cite{sanders2013think} and METIS \cite{karypis1997metis}.
Both algorithms aim to produce balanced partitions while minimizing the number of cut edges, which is particularly desirable in our case.
In practice, we found that KaHIP usually yields higher quality partitions, but the subgraphs produced are occasionally disconnected, and correcting the issue can significantly unbalance the partition. 
In such cases, we resort to METIS, which allows us to enforce connectivity explicitly. 
For the graph sizes considered in our experiments, both algorithms run extremely fast (below one millisecond).


\subsection{Column generation for $(\vpfrelax)$}


Column generation is an algorithm that solves linear programs with an exponential number of variables \cite{desrosiers2005primer}, by adding iteratively its variables, or \textit{columns}. 
The Restricted Master Problem $\rmp$ refers to $\vpf_{\pi}$ with a subset of columns. A \textit{Pricer Problem} generates the column with the lowest reduced costs. When no columns with negative reduced costs are found, $(\rmprelax)$ has been solved to optimality. In our case, for a given virtual subgraph $\GraphH_r^i$, $i\in \{1,\ldots,k\}$, the pricing problem $PP_i$ is actually a \vne problem of the virtual subgraph $\GraphH_r^i$ on $\Graph_s$. It can be solved using the Flow Formulation, with the following objective function:  

\begin{equation}
\begin{split}
    v(PP_i) = \min - \theta_{i} + & \sum_{u \in V_r^i} \sum_{\ubar \in V_s} (w_u - \phi_u - \beta_u ) x_{\ubar u} + \sum_{e \in E_s} \sum_{\ebar \in E_r^i} (w_e - \beta_e) y_{\ebar e} \\
        & +\sum_{\substack{\ebar \in E_r^0 \\ t(\ebar) \in V_r^i}} \sum_{u \in V_s} \alpha_{\ebar u} x_{t(\ebar) u} -  \sum_{\substack{\ebar \in E_r^0 \\ s(\ebar) \in V_r^i}} \sum_{u \in V_s} \alpha_{\ebar u} x_{s(\ebar) u}  \\ 
        & - \sum_{u \in V_s} \sum_{\substack{\ebar\in E_r^0, \\ s(\ebar) \in \GraphH^i_r}} \gamma_{\ebar u} x_{s(\ebar) u}
\end{split}
\end{equation}

\noindent where the dual variables $\theta, \alpha, \phi, \beta, \gamma$ correspond to respectively inequalities (\ref{VPF1}), (\ref{VPF2}), (\ref{VPF1t1}), (\ref{VPF3}) and (\ref{VPF4}), (\ref{VPF5}).

Although the convergence of the column generation algorithm can take a long time, the \textit{Lagrangian bound} proposes a valid lower bound for the problem and can be computed easily at each step of the algorithm:

\begin{equation}
    LGB = v(\rmprelax) + \sum_{i \in \{1, \ldots, k_r\}} v^*(PP_i)
\end{equation}


\subsection{Stabilization}

We observed oscillations in the dual variables during our tests, in particular for the dual variables $\beta$, related to the flow conservation constraints, which can take both positive and negative values. 
At a given iteration of the column generation, when dual variables are far from their optimal values, it results in weak columns. This behaviour is a well-known issue in column generation.
To mitigate these oscillations, we adopt a standard stabilization technique: dual variable smoothing \cite{pessoa2013out}.
 At each iteration, instead of using the current dual variables from the restricted master problem, we use a convex combination of the previous stabilized duals and the current ones. The parameter $\alpha \in [0, 1]$ controls the smoothing strength.
At a given iteration of the column generation, the stabilized counterpart $\hat{\omega}$ of the dual variable $\omega$ is:
\begin{equation*}
    \hat{\omega} = \alpha \hat{\omega}' + (1 - \alpha) \omega
\end{equation*}
where $\hat{\omega}'$ is the stabilized variable at the previous iteration.
In practice, we use $\alpha = 0.9$, which showed good results in \cite{pessoa2013out}. 
Preliminary experiments showed that this stabilization improves the performance of the column generation by around 5\%, at no computational costs.


\subsection{Heuristic sub-pricers}

Solving the pricing problems optimally can be very time-consuming. 
The most efficient algorithms to solve the pricing problems optimally are MILP solvers such as IBM ILOG CPLEX (referred to as CPLEX), using the Flow Formulation. However they have long solving times even for quite small virtual networks on a large substrate network: often more than a minute with $n_r = 10$ and $n_s = 150$. 
Since exact pricers are too expensive, we will only use them at the end of the CG, in order to provide a strong lower bound. 

Before, we use heuristic \textit{sub-pricers}, which are pricers restricted to a substrate subgraph. 
These substrate subgraphs of the sub-pricers are generated as follows. The substrate is partitioned into a large number of parts, using KaHIP or METIS as previously mentioned. Then each substrate subgraph is expanded by adding neighbouring nodes, up to roughly three times the virtual subgraph average size.
This method offers three advantages.
\begin{itemize}
    \item A sub-pricer contains very good columns, if the substrate subgraphs are large enough. Indeed, good sub-mappings will not usually span a large part of the substrate network, as it would induce high routing costs.
    \item Selecting the size of the subgraph provides a good way to balance column quality and sub-pricer difficulty (and runtime when using an exact solver).
    \item All constraints of the $(RMP)$ are covered, which is typically very helpful for convergence of the CG.
    Indeed,  the columns are diverse and pave all parts of the substrate network, since we generated diverse substrate subgraphs.
\end{itemize}

\paragraph{Exact sub-pricers} correspond to solving a sub-pricer with an exact solver (CPLEX in our case).

\paragraph{Greedy sub-pricers} Since solving sub-pricers exactly can still consume too much time and we want to generate quickly many columns at the first iterations, we also propose to use the greedy heuristic described in Algorithm~\ref{alg:greedy} to solve the sub-pricer. This greedy heuristic works as follows.
First, the virtual node placement is initialized, with a random virtual node being placed on a random substrate node. 
Then, virtual nodes are placed one by one. The next virtual node to be placed must have at least one neighbor already placed (the neighbors of a node $\ubar$ are denoted $N(\ubar)$). Substrate nodes are ranked according to their distance to other substrate nodes hosting neighbors of the virtual node (list $l$). The virtual node is placed on the best substrate node.  
When all virtual nodes are placed, the virtual edge routing is achieved by solving successive shortest paths problems on residual edge capacities. 
This greedy heuristic is similar to those of works \cite{cheng2011virtual, gong2014toward, zhang2016virtual}. However, in these works, the node placement is achieved by ranking substrate nodes based on their capacity, cost, and connectivity. We found that ranking based on the distances was much more effective in our case.
When the virtual subgraph is small, the heuristic usually provides a reasonably good mapping, while running very fast (well under 0.01 seconds) because the substrate subgraph is small. Since it is so cheap, we run the heuristic 100 times and keep the mapping with the best reduced cost. Columns generated that way are near optimal, and greatly help the beginning of the CG convergence at a much lower computational cost than exact sub-pricers.

\begin{algorithm}
\caption{Greedy}\label{alg:greedy}
\begin{algorithmic}[1]
\Require VNE instance
\Ensure a mapping for the instance (if one was found)
\State Initialize arrays $m_V[1..n_r] = [0..0]$
\State $m_V[$random$(1..n_r)] = $random$(1..n_s)$
\For{i in $1:n_r-1$}
  \State $\ubar \gets $random$(\vbar: m_V(\vbar)=0 : $
  \State $\qquad \qquad \qquad |\wbar \in N(\vbar):m_V(\wbar) \neq 0| \ge 1)$
  \State $l_u = \sum_{\vbar \in N(\ubar)} $ShortestPath$(m_V(\vbar), u)$
  \State $m_V[\ubar] = $argmin$(l_u)$
\EndFor
\State $m_E = $SuccessiveShortestPathRouting$(m_V)$
\State \Return $m=(m_V, m_E)$
\end{algorithmic}
\end{algorithm}

\subsection{Lower bounds results}\label{sec:lowerbounds}

In what follows, we study the performance of our column generation procedure. Our implementation is developed in Julia \cite{bezanson2017julia}, with Graphs.jl \cite{Graphs2021} and JuMP \cite{dunning2017jump}. We use the state-of-the-art solver IBM ILOG CPLEX 22.1. In preliminary tests, we tried other solvers such as Gurobi, but CPLEX was the best performing one. All computational experiments were conducted on a machine equipped with an Intel(\textsuperscript{\textregistered}) Xeon\textsuperscript{\textregistered} E5-2650 v3 CPU running at 2.30 GHz. 

All the bounds were computed within one hour. Our algorithm is the following. First, 800 columns are generated through the greedy sub-pricers, which, on our setup, takes less than a minute. Then, exact sub-pricers are run for 20 minutes, or until 2500 columns have been generated. Finally, exact pricers are run, which permits the generation of lower Lagrangian bounds. In the tables, the value reported is this Lagrangian bound $LGB$. \\

\begin{table*}
\small
\centering
\begin{tabular*}{\textwidth}{@{\extracolsep\fill}lllllllllll@{}}\toprule 
  \textbf{$k_r$} & \multicolumn{2}{c}{$\lfloor \frac{n_r}{10} \rfloor + 2$} & \multicolumn{2}{c}{$\lfloor \frac{n_r}{10} \rfloor + 1$} & \multicolumn{2}{c}{$\lfloor \frac{n_r}{10} \rfloor$} & \multicolumn{2}{c}{$\lfloor \frac{n_r}{10} \rfloor - 1$} & \multicolumn{2}{c}{$\lfloor \frac{n_r}{10} \rfloor - 2$} \\\cmidrule{2-3} \cmidrule{4-5} \cmidrule{6-7} \cmidrule{8-9} \cmidrule{10-11}
   \textbf{Capacities} & Large & Small & Large & Small & Large & Small & Large & Small   & Large & Small \\ \midrule
    IRIS & 180.6 & 207.0 & 180.8 & 206.4 & 181.7 & 212.4 & 183.5 & 214.8 & 184.1 & 211.1 \\
    zib54 & 217.2 & 260.2 & 216.8 & 256.2 & 212.5 & 260.6 & 213.5 & 237.8 & 0.0 & 224.3 \\
    Hibernia & 222.4 & 261.4 & 229.3 & 270.5 & 223.5 & 267.1 & 210.3 & 238.8 & 0.0 & 0.0 \\
    US & 225.2 & 256.3 & 227.2 & 259.4 & 230.8 & 273.0 & 213.4 & 252.9 & 202.9 & 254.8 \\
    ta2 & 273.3 & 329.0 & 282.0 & 345.9 & 276.8 & 355.0 & 269.5 & 328.9 & 0.0 & 0.0 \\
    TW & 297.4 & 371.8 & 293.3 & 367.1 & 286.4 & 381.4 & 287.2 & 382.7 & 266.5 & 358.7 \\
    Intellifiber & 271.4 & 311.4 & 275.1 & 321.0 & 285.3 & 331.0 & 262.7 & 312.1 & 256.0 & 301.0 \\
    Uninett & 289.1 & 333.9 & 287.2 & 344.4 & 268.6 & 321.4 & 274.2 & 337.8 & 270.4 & 325.5 \\
    OTEGlobe & 306.3 & 352.3 & 312.5 & 366.6 & 308.0 & 359.6 & 308.1 & 371.4 & 303.8 & 344.5 \\
    Interroute & 433.8 & 502.8 & 433.7 & 524.0 & 437.9 & 529.3 & 440.4 & 536.4 & 409.4 & 524.0 \\
    \textbf{Average} & 271.7 & 318.6 & 273.8 & 326.2 & 271.2 & 329.1 & 266.3 & 321.4 & 189.3 & 254.4 \\
    \textbf{Average gap to v(RMP)} & 0.14 & 0.10 & 0.15 & 0.11 & 0.20 & 0.12 & 0.29 & 0.23 & - & - \\
    \bottomrule
\end{tabular*}
\caption{Lower bound value and gap to the RMP value, depending on $k_r$}
\label{table:lb-kr}
\end{table*}

\begin{table*}
\small
\centering
\begin{tabular*}{\textwidth}{@{\extracolsep\fill}llllllll@{}}\toprule
  \multicolumn{2}{c}{\textbf{Capacities}} & \multicolumn{2}{c}{\textbf{Large}} & \multicolumn{2}{c}{\textbf{Medium}} & \multicolumn{2}{c}{\textbf{Small}} \\ \cmidrule{3-4} \cmidrule{5-6} \cmidrule{7-8}
  \multicolumn{2}{c}{\textbf{Algorithm}} & \textbf{CG-LB} & \textbf{CPLEX} & \textbf{CG-LB} &  \textbf{CPLEX} & \textbf{CG-LB} &  \textbf{CPLEX} \\ \midrule
  IRIS & TATA & \textbf{195} & 169 & \textbf{191} & 162 & \textbf{223} & 197 \\
   & GTS & \textbf{193} & 176 & \textbf{194} & 177 & \textbf{220} & 196 \\
   & Cogent & \textbf{182} & 170 & \textbf{183} & 170 & \textbf{213} & 189 \\
  zib54 & TATA & \textbf{225} & 193 & \textbf{225} & 188 & \textbf{280} & 264 \\
   & GTS & \textbf{220} & 195 & \textbf{231} & 199 & \textbf{266} & 239 \\
   & Cogent & \textbf{213} & 186 & \textbf{217} & 184 & \textbf{261} & 216 \\
  Hibernia & TATA & \textbf{235} & 187 & \textbf{233} & 188 & \textbf{280} & 222 \\
   & GTS & \textbf{230} & 198 & \textbf{230} & 199 & \textbf{282} & 216 \\
   & Cogent & \textbf{224} & 187 & \textbf{225} & 186 & \textbf{268} & 210 \\
  US & TATA & \textbf{243} & 212 & \textbf{242} & 213 & \textbf{288} & 250 \\
   & GTS & \textbf{248} & 214 & \textbf{244} & 211 & \textbf{285} & 246 \\
   & Cogent & \textbf{231} & 200 & \textbf{231} & 199 & \textbf{273} & 226 \\
  ta2 & TATA & \textbf{299} & 245 & \textbf{311} & 251 & \textbf{384} & 307 \\
   & GTS & \textbf{304} & 256 & \textbf{311} & 260 & \textbf{377} & 299 \\
   & Cogent & \textbf{277} & 238 & \textbf{278} & 242 & \textbf{356} & 279 \\
  TW & TATA & \textbf{311} & 268 & \textbf{323} & 278 & \textbf{415} & 365 \\
   & GTS & \textbf{327} & 280 & \textbf{336} & 293 & 405 & \textbf{418} \\
   & Cogent & \textbf{287} & 259 & \textbf{300} & 265 & \textbf{382} & 347 \\
  Intellifiber & TATA & \textbf{299} & 254 & \textbf{299} & 258 & \textbf{359} & 308 \\
   & GTS & \textbf{303} & 264 & \textbf{306} & 264 & \textbf{353} & 294 \\
   & Cogent & \textbf{286} & 245 & \textbf{282} & 245 & \textbf{331} & 279 \\
  Uninett & TATA & \textbf{292} & 262 & \textbf{293} & 262 & \textbf{360} & 322 \\
   & GTS & \textbf{298} & 273 & \textbf{297} & 274 & \textbf{346} & 307 \\
   & Cogent & \textbf{269} & 253 & \textbf{269} & 253 & \textbf{322} & 288 \\
  OTEGlobe & TATA & \textbf{330} & 292 & \textbf{327} & 291 & \textbf{395} & 353 \\
   & GTS & \textbf{335} & 298 & \textbf{334} & 299 & \textbf{396} & 347 \\
   & Cogent & \textbf{308} & 276 & \textbf{305} & 278 & \textbf{360} & 314 \\
  Interroute & TATA & \textbf{492} & 448 & \textbf{490} & 449 & \textbf{607} & 568 \\
   & GTS & \textbf{501} & 455 & \textbf{497} & 456 & Inf & \textbf{Inf} \\
   & Cogent & \textbf{438} & 406 & \textbf{439} & 408 & \textbf{530} & 476 \\ 
   \multicolumn{2}{l}{Average} &\textbf{286.1} & 251.5 & \textbf{287.7} & 252.9 & \textbf{338.0} &  293.9 \\
   \multicolumn{2}{l}{\# best} & \textbf{30} &  0 & \textbf{30} & 0 & \textbf{28} & 2 \\
   \bottomrule
\end{tabular*}
\caption{Lower Bounds obtained on all instances}
\label{tab:lowerbound}
\end{table*}

First, we aim to assess the right partition size, i.e., the number of virtual subgraphs $k_r$. 
Experiments were run on the 10 virtual networks using CogentCo as the substrate network, with large capacities.
The Lagrangian lower bounds obtained are reported in Table~\ref{table:lb-kr}.
We also report the average gap to the value of the $(\rmprelax)$, which provides a valuable indication of how far column generation is from convergence, and therefore, of the potential for further improvement of the lower bound $LGB$. 
In most instances, the algorithm still presents an important gap (more than 10\%) after one hour. 
The gaps are higher when the substrate network has large capacities and when the virtual networks are larger, which is expected since these instances have more feasible columns. 
Gaps are also higher for smaller $k_r$, i.e., larger virtual subgraphs, for which the sub-pricers become increasingly more difficult. However, smaller $k_r$ typically lead to better final values, as discussed in Section~3.3. Hence, the choice of $k_r$ permits a tradeoff between solving time and bound quality. In the results of our experiments, the best bound values are obtained for $k_r = \lfloor \frac{n_r}{10} \rfloor$ and $k_r = \lfloor \frac{n_r}{10} \rfloor +1$.

Next, Table~\ref{tab:lowerbound} shows the results of experiments over all instances. 
We report the Lagrangian bound we could produce after one hour with $k_r = \lfloor \frac{n_r}{10} \rfloor$, and compare it to the bound found by CPLEX after one hour.
They show that our column generation consistently finds clearly better lower bounds than CPLEX, beating it on 88 of the 90 instances.
Our method allows an average improvement over CPLEX of 13.9\% for large capacities, 13.8\% for medium capacities, and 11.9\% for small capacities. 
On small capacities, CPLEX performs slightly better. We attribute this to the fact that there are fewer $x$ variables to branch on, hence the branch-and-bound tree is further advanced after one hour.
Overall, these results show that the new Virtual Partition Formulation is much stronger than the Flow Formulation.

\begin{remark}
    The lower bounds of an instance are very close for both large and medium capacity settings, for both our method and CPLEX. This is despite the integer problem being sensibly more difficult, as shown in the next section. To understand this, we observed the fractional solutions, and observed that they spread over most parts of the substrate network. This allows those fractional solutions to keep a low utilization on most substrate edges, which will be only slightly affected by the medium capacity settings.
    On the other hand, small capacities present sensibly higher bounds, as both edge and node capacities are much more restrictive.
\end{remark}

\section{Price-and-Branch heuristic}

In this section, we will aim to find solutions for the large instances described in Section~\ref{sec:instances}.
Greedy heuristics, such as Algorithm~\ref{alg:greedy} or those of \cite{cheng2011virtual, gong2014toward, zhang2016virtual}, perform less well as the virtual network grows. We will show that the Virtual Partition Formulation can be used to find high-quality solutions for the instances considered in this paper.
A straightforward approach consists in solving $(RMP)$ with integer variables after the column generation phase. This is known as a \textit{Price-and-Branch heuristic} (or Restricted Master Problem heuristic). 
In many decomposition schemes, the columns generated are already well suited for integer solutions, and this heuristic performs well \cite{Wang2025PriceBranch}. 
However, in our case, this algorithm, with columns generated as in Section~4, yielded high costs solutions, or no solutions at all. 
We identified two issues in the columns produced using the procedure in Section~4. 
We propose algorithmic improvements to correct these issues.
They significantly help integer performance, as shown by the experimental studies. 

\subsection{Intersecting columns}

We say that two columns \textit{intersect} when they share at least one substrate node for hosting virtual nodes. In that case, they can not be used together in an integer solution due to the node capacity constraints, although they can be used together in a linear solution of $(\rmprelax)$ with fractional values.
One issue we identified in the previously generated columns is that many of them intersected at the central nodes of the substrate network.
To illustrate this, consider Example~\ref{ex:heuristic}

\begin{myexample} \label{ex:heuristic}
    \begin{figure}[h]
\centering
\begin{tikzpicture}[scale=0.95] 
    \begin{scope}[xshift=-2.5cm]
        \node at (0,-3) {$\Graph_r$};
        \node[vnode] (v1) at (-0.75,0.65){$\ubar_1$};
        \node[vnode] (v3) at (0,1.55){$\ubar_3$};
        \node[vnode] (v2) at (0.75,0.65){$\ubar_2$};
        \node[vnode] (v4) at (-0.75,-0.65){$\ubar_4$};
        \node[vnode] (v6) at (0.,-1.55){$\ubar_6$};
        \node[vnode] (v5) at (0.75, -0.65){$\ubar_5$};
        
        \draw[networkedge] (v1) -- (v2);
        \draw[networkedge] (v2) -- (v3);
        \draw[networkedge] (v3) -- (v1);
        \draw[networkedge] (v1) -- (v4);
        \draw[networkedge] (v2) -- (v5);
        \draw[networkedge] (v4) -- (v5);
        \draw[networkedge] (v5) -- (v6);
        \draw[networkedge] (v6) -- (v4);
    \end{scope}

    \begin{scope}[xshift=2.5cm]
        \node at (0,-3) {$\Graph_s$};
        \node[snode] (u1) at (-0.75,0.75){$u_1$};
        \node[snode] (u2) at (0.75,0.75){$u_4$};
        \node[snode] (u3) at (0.75,-0.75){$u_{7}$};
        \node[snode] (u4) at (-0.75,-0.75){$u_{10}$};

        \node[snode] (u12) at (-2.,0.75){$u_{12}$};  
        \node[snode] (u13) at (-2.,-0.75){$u_{11}$};  
        
        \node[snode] (u22) at (-0.75,2){$u_{2}$};  
        \node[snode] (u23) at (0.75,2){$u_{3}$};  
        
        \node[snode] (u32) at (2,0.75){$u_{5}$};  
        \node[snode] (u33) at (2,-0.75){$u_{6}$};    
        
        \node[snode] (u42) at (0.75,-2){$u_{8}$};  
        \node[snode] (u43) at (-0.75,-2){$u_{9}$};

        \draw[networkedge] (u1) -- (u2);
        \draw[networkedge] (u2) -- (u3);
        \draw[networkedge] (u3) -- (u4);
        \draw[networkedge] (u4) -- (u1);
        
        \draw[networkedge] (u1) -- (u12);
        \draw[networkedge] (u12) -- (u13);
        \draw[networkedge] (u13) -- (u4);
        
        \draw[networkedge] (u1) -- (u22);
        \draw[networkedge] (u22) -- (u23);
        \draw[networkedge] (u23) -- (u2);
        
        \draw[networkedge] (u2) -- (u32);
        \draw[networkedge] (u32) -- (u33);
        \draw[networkedge] (u33) -- (u3);
        
        \draw[networkedge] (u3) -- (u42);
        \draw[networkedge] (u42) -- (u43);
        \draw[networkedge] (u43) -- (u4);
    \end{scope}

\end{tikzpicture}
\caption{A VNE instance}
\label{fig:overlapping-columns}
\end{figure}
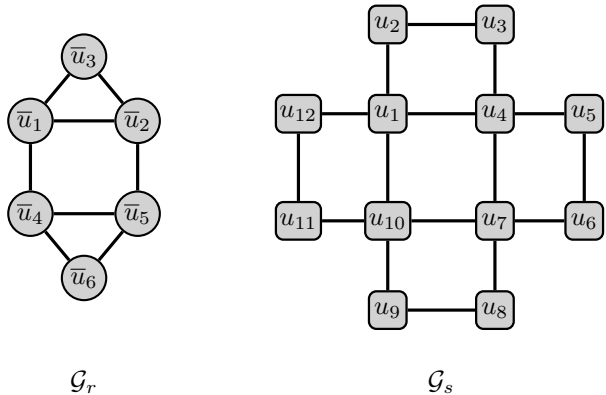

Figure~\ref{fig:overlapping-columns} provides an instance of VNE. The virtual network (on the left) is two connected triangles, and has unit demands on every node and edge. The substrate network (on the right) is four connected squares, and has unit capacities on every node and edge, a null cost on every node, and a unit cost on every edge.
\end{myexample}

Take the partition $\pi = ( \{\ubar_1, \ubar_2, \ubar_3\}, \{\ubar_4, \ubar_5, \ubar_6\} )$ for this virtual network, which gives the two virtual subgraph $\GraphH_r^1$ and $\GraphH_r^2$. 
Consider the following two sub-mappings for $\GraphH_r^1$:

\begin{equation*}
\begin{aligned}
     m^{a}: \quad & m^{a}_V(\ubar_1) = u_1,\ m^{a}_V(\ubar_2) = u_{4},\ m^{a}_V(\ubar_3) = u_{2}, \\
     & m^{a}_E(\ubar_1, \ubar_2) = (u_1, u_{4}) \\
     & m^{a}_E(\ubar_2, \ubar_3) = (u_{4}, u_{3}, u_{2}), \\
     & m^{a}_E(\ubar_3, \ubar_1) = (u_{2}, u_1) \\
     m^{b}: \quad & m^{b}_V(\ubar_1) = u_{7},\ m^{b}_V(\ubar_2) = u_{10},\ m^{b}_V(\ubar_3) = u_{8}, \\
     & m^{b}_E(\ubar_1, \ubar_2) = (u_{7}, u_{10}), \\
     & m^{b}_E(\ubar_1, \ubar_2) = (u_{10}, u_{9}, u_{8}), \\
     & m^{b}_E(\ubar_3, \ubar_1) = (u_{8}, u_{7})
\end{aligned}
\end{equation*}
And, similarly, the two sub-mappings for $\GraphH^2_r$:
\begin{equation*}
\begin{aligned}
     m^{c}: \quad & m^{c}_V(\ubar_4) = u_{4},\ m^{c}_V(\ubar_5) = u_{7},\ m^{c}_V(\ubar_6) = u_{5}, \\
     & m^{c}_E(\ubar_4, \ubar_5) = (u_{4}, u_{7}) \\
     & m^{c}_E(\ubar_5, \ubar_6) = (u_{4}, u_{6}, u_{5}) \\
     & m^{c}_E(\ubar_6, \ubar_4) = (u_{5}, u_{4}) \\
     m^{d}:  \quad & m^{d}_V(\ubar_4) = u_{10},\ m^{d}_V(\ubar_5) = u_1,\ m^{d}_V(\ubar_6) = u_{11}, \\
     & m^{d}_E(\ubar_4, \ubar_5) = (u_{10}, u_1) \\
     & m^{d}_E(\ubar_5, \ubar_6) = (u_1, u_{12}, u_{11}) \\
     & m^{d}_E(\ubar_6, \ubar_4) = (u_{11}, u_{10})
\end{aligned}
\end{equation*}

The columns associated all have a cost of $4$. 
With these four columns, $v(\rmprelax) = 10$, with the solution:
\begin{equation*}
\begin{aligned}
     & \lambda^1_{m_{11}} = \lambda^1_{m_{12}} = 0.5, \\
     & \lambda^2_{m_{21}} = \lambda^2_{m_{22}} = 0.5,  \\
     & y_{(\ubar_1, \ubar_4)(u_1, u_{10})} = y_{(\ubar_1, \ubar_4)(u_{7}, u_{4})} = 0.5, \\
     & y_{(\ubar_2, \ubar_5)(u_{4}, u_1)} = y_{(\ubar_2, \ubar_5)(u_{10}, u_{4})} = 0.5
\end{aligned}
\end{equation*}
and the other $y$ variables are set to zero.
Here, $v(\rmprelax) = v(\vpfrelax_\pi) = v^*$.
However, no integer solutions can be constructed from these columns, as they intersect on nodes $u_1, u_{4}, u_{7}, u_{10}$.

In practice, we observed that the same central substrate nodes appeared in many columns, which resulted in poor performances of the Price-and-Branch heuristic.
To solve this issue, our key idea is to consider a strict partition of the substrate network as follows.
Before the beginning of the column generation, the substrate network is partitioned, and columns are then generated with sub-pricers on these substrate subgraphs.
This approach has three advantages:

\begin{itemize}
    \item Columns generated on a substrate subgraph do not intersect with all columns from other substrate subgraphs.
    \item Assigning coherent parts of the virtual network to coherent parts of the substrate network will induce low-cost sub-mappings within the substrate network.
    \item Columns can be quickly generated when the substrate subgraphs are small (similar to Section~4.4).
\end{itemize}

In our experiments, this method significantly improves the performance of the Price-and-Branch heuristic. In most cases, only a few columns per virtual subgraph are sufficient to obtain feasible solutions. Adding more columns further improves solution quality, at the cost of increased computation time. This behavior is mainly due to the integer resolution of the restricted master problem $(RMP)$, whose complexity grows exponentially with the number of columns. In practice, solving the (RMP) with around 500 columns already often takes more than one minute. 


With this implementation, the quality of the obtained solution depends strongly on the partitions computed for both the virtual and substrate networks. To generate these partitions, we also use METIS and KaHIP, as described in Section~4.1. The size of the substrate subgraphs also plays a crucial role in the effectiveness of the heuristic. If the substrate subgraphs are too small, the sub-pricers may become infeasible or produce expensive columns. To illustrate this remark, consider a virtual and a substrate subgraph of the same number of nodes. When the virtual is denser than the substrate, then long virtual edge routing paths are necessary. Conversely, very large subgraphs produce cheaper columns, but can result in higher routing costs for the cut edges. 
Based on our preliminary experiments, a good choice is to partition the substrate network into balanced subgraphs whose average size (number of capacitated nodes) is about 1.5 times the maximum size of the virtual subgraphs. This ratio, however, may vary depending on the demand and capacity settings.

\subsection{Better virtual boundary node placements}

In a graph partition, a \textit{boundary node} is a node adjacent to a cut edge, whereas a node with no neighbors outside its subgraph is called an \textit{internal node}.
A second issue we observed in the generated columns is that the virtual boundary nodes are often placed on internal substrate nodes. 
This induces longer routing paths for virtual cut edge.
To illustrate this, consider again the instance of Example~\ref{ex:heuristic}. Take the partition $\pi = (\{\ubar_1, \ubar_2, \ubar_3\}, \{\ubar_4, \ubar_5, \ubar_6\} )$. 
Here, the virtual boundary nodes $\ubar_1, \ubar_2, \ubar_4, \ubar_5$ have to be placed on the substrate boundary nodes $u_1, u_{4}, u_{7}, u_{10}$, otherwise no solution exists.
However, throughout our experiments, we observed that virtual boundary nodes were often placed on outlier substrate nodes. 

To tackle poor boundary placements, our key idea is to modify the objective functions of the sub-pricers to incorporate an estimation of routing costs for virtual cut edges.
This is done through Algorithm~\ref{alg1}, which, given the virtual and substrate partitions, generates $k_r$ columns (one per virtual subgraph) as follows.
Each virtual subgraph is assigned to a distinct substrate subgraph (array $a$), and virtual nodes have temporary placements (array $t_V$). At the beginning, this temporary placement corresponds to the most central substrate node of the substrate subgraph to which they are assigned to (function Center). To do this, we use closeness centrality defined in \cite{sabidussi1966centrality}. When embedding a subgraph, the cost of placing a boundary node is adjusted by the shortest-path distance (function ShortestPath) to the temporary placements of its neighbors in other substrate subgraphs. To this end, we use the Floyd-Warshall algorithm described in \cite{floyd1962algorithm}. This adjustment makes substrate boundary nodes more attractive for virtual boundary nodes, and generated columns (function SolveSubPricer) are more consistent with each other as they take into account the sub-mapping of other subgraphs. 
The assignment of virtual subgraph to substrate subgraph is randomly decided. 

\begin{algorithm}
\caption{GenerateIntegerColumns}\label{alg1}
\begin{algorithmic}[1]
\Require VNE instance, dual variables, Virtual partition $\pi_r$, substrate partition $\pi_s$
\Ensure $k_r$ sub-mappings (one per virtual subgraph)

\State Initialize arrays $a[1..k_r],\; t_V[1..n_r]$

\For{$\GraphH_r^i \in \pi_r$}
  \State $a[i] \gets \GraphH_s^j \in \pi_s \setminus a$
  \State $u_c \gets $ Center$(a[i])$
  \State $t_V[\ubar] \gets u_c \;\; \forall \ubar \in V(\GraphH_r^i)$
\EndFor

\For{$\GraphH_r^i \in \pi_r$}
  \State $w'_{\ubar u} \gets w_u \;\; \forall \ubar \in V_r,\; u \in V_s$
  \For{$\ebar \in E_r^0$ with $t(\ebar) \in V_r^i$}
    \State $\hat{w}^{t(\ebar)}_{u} \gets w_{u} + $ ShortestPath$(t_V(s(\ebar)), u)$
  \EndFor
  \For{$\ebar \in E_r^0$ with $s(\ebar) \in V_r^i$}
    \State $\hat{w}^{s(\ebar)}_{u} \gets w_{u} +$ ShortestPath$(t_V(t(\ebar)), u)$
  \EndFor
  \State $m^i \gets$ SolveSubPricer($\GraphH_r^i, a[i], \hat{w}$, dual vars)
  \State $t_V[\ubar] \gets m^i_V(\ubar) \;\; \forall \ubar \in V_r^i$
\EndFor

\State \Return $\{m^i : \GraphH_r^i \in \pi_r\}$
\end{algorithmic}
\end{algorithm}

In our Price-and-Branch heuristic, Algorithm~\ref{alg1} is used until a given number of columns is generated.
Preliminary experiments showed that incorporating routing-cost estimations improves solution quality by 5–10\%, with no increase in solving time.

\subsection{Heuristics results}

We first tune the parameters $k_r$ and the number of used columns using instances with large capacities. Then we run tests over the whole instance set using the chosen parameters.

\begin{table*}
\scriptsize
\centering
\begin{tabular*}{\textwidth}{@{\extracolsep\fill}lrrrrrrrrrr@{}}\toprule
    \textbf{Pricer} & \multicolumn{5}{c}{Greedy} & \multicolumn{5}{c}{CPLEX} \\ \cmidrule{2-6} \cmidrule{7-11}
    $k_r$ & $\lfloor \frac{n_r}{10} \rfloor + 2$ & $\lfloor \frac{n_r}{10} \rfloor + 1$ & $\lfloor \frac{n_r}{10} \rfloor$ & $\lfloor \frac{n_r}{10} \rfloor -1 $ & $\lfloor \frac{n_r}{10} \rfloor -2$ & $\lfloor \frac{n_r}{10} \rfloor + 2$ & $\lfloor \frac{n_r}{10} \rfloor + 1$ & $\lfloor \frac{n_r}{10} \rfloor$ & $\lfloor \frac{n_r}{10} \rfloor-1$ & $\lfloor \frac{n_r}{10} \rfloor-2$ \\ \midrule 
  IRIS & 293 & 308 & 299 & 289 & 282 & 292 & 293 & 290 & 274 & 267 \\
  zib54 & 373 & 399 & 402 & 386 & 355 & 364 & 390 & 380 & 345 & 335 \\
  Hibernia & 358 & 372 & 366 & 379 & 353 & 350 & 357 & 333 & 333 & 320 \\
  US & 386 & 383 & 385 & 382 & 384 & 376 & 366 & 363 & 360 & 355 \\
  ta2 & 530 & 553 & 548 & 544 & 550 & 536 & 527 & 528 & 510 & 485 \\
  TW & 563 & 593 & 619 & 622 & 629 & 581 & 592 & 587 & 555 & 576 \\
  Intellifiber & 470 & 476 & 461 & 451 & 458 & 445 & 445 & 433 & 443 & 418 \\
  Uninett & 515 & 519 & 573 & 531 & 531 & 491 & 506 & 546 & 512 & 491 \\
  OTEGlobe & 510 & 501 & 520 & 502 & 479 & 489 & 478 & 488 & 471 & 444 \\
  Interroute & 844 & 818 & 793 & 809 & 780 & 801 & 800 & 771 & 762 & 745 \\
  \textbf{Average cost} & 484.2 & 492.2 & 496.6 & 489.5 & 480.1 & 472.5 & 475.4 & 471.9 & 456.5 & 443.6  \\
    \textbf{Time (s)} & 18.0 & 24.5 & 20.8 & 19.5 & 20.3 & 62.7 & 72.6 & 89.7 & 256.5 & 1970.8 \\
    \bottomrule
\end{tabular*}
\caption{Price-and-Branch heuristic performance depending on the number of virtual subgraphs}
\label{table:nbsubg}
\end{table*}

First, we study the influence of the partition size $k_r$ on the Price-and-Branch heuristic, with 200 columns generated by greedy or exact (CPLEX) pricers, with Algorithm~\ref{alg1}. The results are reported in Table~\ref{table:nbsubg}.
With greedy pricers, solving times and solution quality remain quite stable. Greedy sub-pricers are quite competitive for high $k_r$, since near-optimal solutions can be found with small virtual subgraphs,. 
On the other hand, for exact pricers, we find better solutions with $k_r \ge \lfloor \frac{n_r}{10} \rfloor$. The best solutions are found with $k_r = \lfloor \frac{n_r}{10} \rfloor - 2$, with, unfortunately, very long solving times, as the sub-pricers become more difficult. From now on we opt for $ k_r = \lfloor \frac{n_r}{10} \rfloor$.

\begin{table*}
\scriptsize
\centering
\begin{tabular*}{\textwidth}{@{\extracolsep\fill}lrrrrrrrrrrrr@{}}\toprule
  \textbf{Pricer} & \multicolumn{6}{c}{Greedy} & \multicolumn{6}{c}{CPLEX} \\ \cmidrule{2-7}\cmidrule{8-13}
  \textbf{Nb. of columns} & $k_r$ & 25 & 50 & 100 & 200 & 500 & $k_r$ & 25 & 50 & 100 & 200 & 500 \\ \midrule
  IRIS & 373 & 344 & 335 & 317 & 299 & 300 & 407 & 321 & 302 & 296 & 290 & 287 \\
  zib54 & 556 & 430 & 420 & 411 & 402 & 394 & 526 & 405 & 396 & 384 & 380 & 369 \\
  Hibernia & 520 & 411 & 386 & 373 & 366 & 349 & 472 & 380 & 362 & 344 & 333 & 324 \\
  US & 524 & 437 & 417 & 382 & 385 & 374 & 477 & 397 & 377 & 371 & 363 & 356 \\
  ta2 & 867 & 618 & 621 & 563 & 548 & 553 & 691 & 584 & 553 & 539 & 528 & 515 \\
  TW & 852 & 684 & 669 & 640 & 619 & 609 & 817 & 659 & 614 & 605 & 587 & 564 \\
  Intellifiber & 654 & 492 & 477 & 474 & 461 & 442 & 562 & 485 & 456 & 440 & 433 & 419 \\
  Uninett & 845 & 629 & 622 & 599 & 573 & 554 & 755 & 612 & 585 & 560 & 546 & 530 \\
  OTEGlobe & 737 & 565 & 544 & 527 & 520 & 505 & 602 & 532 & 512 & 496 & 488 & 476 \\
    Interroute & 911 & 899 & 908 & 812 & 793 & 760 & 928 & 857 & 809 & 793 & 771 & 734 \\
\textbf{Average cost} & 683.9 & 550.9 & 539.9 & 509.8 & 496.6 & 484.0 & 623.7 & 523.2 & 496.6 & 482.8 & 471.9 & 457.4 \\
  \textbf{Time (s)} & 0.9 & 1.9 & 3.3 & 5.5 & 20.8 & 237.1 & 2.9 & 10.9 & 20.0 & 41.6 & 89.7 & 360.1 \\
  \bottomrule
\end{tabular*}
\caption{Price-and-Branch heuristic performance depending on the number of columns}
\label{table:nbcols}
\end{table*}

Next, we study the impact of the number of columns. The results are reported in Table~\ref{table:nbcols}. 
As expected, using more columns leads to better solutions. Since the solving time of the integer Restricted Master Problem increases exponentially, we used a time limit of 600 seconds, which was often met with 500 columns.
Again, using exact pricers, compared to greedy algorithms, improves the performance on average by 5 to 10\%, at the cost of higher solving times. From now on, we opt for 200 columns and solving sub-pricers with CPLEX.

\begin{table*}
\centering
\small
\begin{tabular*}{\textwidth}{@{\extracolsep\fill}lllllllllll@{}}\toprule
  \multicolumn{2}{c}{\textbf{Capacities}} & \multicolumn{3}{c}{\textbf{Large}} & \multicolumn{3}{c}{\textbf{Medium}} & \multicolumn{3}{c}{\textbf{Small}} \\ \cmidrule{3-5} \cmidrule{6-8} \cmidrule{9-11}
  \multicolumn{2}{c}{\textbf{Algorithm}} & \textbf{PBH} & \textbf{Greedy} & \textbf{CPLEX} & \textbf{PBH} & \textbf{Greedy} & \textbf{CPLEX} &  \textbf{PBH} & \textbf{Greedy} & \textbf{CPLEX} \\ \midrule
  IRIS & TATA & \textbf{310} & 352 & 366 & 319 & 400 & 460 & 413 & - & - \\
   & GTS & \textbf{316} & 363 & 470 & 308 & 360 & 464 & 370 & 483 & - \\
   & Cogent & \textbf{292} & 342 & 619 & 287 & 339 & 521 & 335 & 422 & 478 \\
    zib54   & TATA & \textbf{409} & 527 & 759 & 424 & - & - & - & - & - \\
            & GTS & \textbf{390} & 487 & 1011 & 401 & 678 & 573 & - & - & - \\
            & Cogent & \textbf{377} & 485 & 707 & 434 & - & 742 & - & - & - \\
    Hibernia & TATA & \textbf{368} & 513 & 662 & 365 & 571 & 685 & 457 & - & - \\
   & GTS & \textbf{348} & 476 & 620 & 364 & 540 & - & 441 & - & 525 \\
   & Cogent & \textbf{339} & 489 & 725 & 344 & 663 & 697 & 411 & - & - \\
  US & TATA & \textbf{372} & 507 & 735 & 374 & 514 & 738 & 470 & - & 608 \\
   & GTS & \textbf{366} & 471 & 565 & 371 & 469 & 753 & 439 & - & - \\
   & Cogent & \textbf{365} & 467 & 840 & 359 & 474 & 582 & 398 & - & - \\
  ta2 & TATA & \textbf{576} & 752 & 1578 & 710 & - & - & - & - & - \\
   & GTS & \textbf{546} & 699 & 1723 & 597 & - & - & - & - & - \\
   & Cogent & \textbf{537} & 700 & - & - & - & - & - & - & - \\
  TW & TATA & \textbf{647} & 797 & - & 849 & - & - & - & - & - \\
   & GTS & \textbf{641} & 728 & - & - & - & - & - & - & - \\
   & Cogent & \textbf{574} & 739 & - & - & - & - & - & - & - \\
  Intellifiber & TATA & \textbf{452} & 630 & 1088 & 453 & 706 & 838 & 596 & - & - \\
   & GTS & \textbf{445} & 585 & 1247 & 460 & 653 & 829 & 549 & - & - \\
   & Cogent & \textbf{425} & 584 & 864 & 450 & 661 & 847 & 511 & - & - \\
  Uninett & TATA & \textbf{620} & 693 & 1312 & 794 & - & - & - & - & - \\
   & GTS & \textbf{590} & 635 & 1330 & - & - & - & - & - & - \\
   & Cogent & \textbf{551} & 632 & - & 659 & - & - & - & - & - \\
  OTEGlobe & TATA & \textbf{528} & 649 & 1437 & 535 & 722 & 742 & 684 & - & - \\
   & GTS & \textbf{522} & 631 & 1211 & 501 & 693 & 833 & - & - & - \\
   & Cogent & \textbf{485} & 647 & - & 483 & 770 & 876 & 564 & - & - \\
    Interroute & TATA & \textbf{824} & 1048 & - & 860 & - & - & - & - & - \\
   & GTS & \textbf{821} & 1032 & - & 834 & - & - & - & - & Inf \\
   & Cogent & \textbf{761} & 1015 & - & 788 & - & - & - & - & - \\
     \multicolumn{2}{l}{\textbf{\# feasible}}  & \textbf{30} & \textbf{30} & 22  & \textbf{26} & 16 & 16 & \textbf{14} & 2 & 3 \\ 
  \multicolumn{2}{l}{\textbf{\# best}}  & \textbf{30} & 0 & 0  & \textbf{26} & 0 & 0 & \textbf{14} & 0 & 1 \\ 
   \bottomrule
\end{tabular*}
\caption{Price-Branch Heuristic results}
\label{tab:heuristic-results}
\end{table*}

Finally, we evaluate our Price-and-Branch heuristic on the full set of instances. On average, the heuristic runs for about 100 seconds.
We compare our approach with two alternatives:\\
(1) the greedy heuristic Algorithm~\ref{alg:greedy}, executed repeatedly for 100 seconds, leading to several thousand iterations, among which we keep the best one. \\ 
(2) CPLEX in heuristic mode, with a time limit of 100 seconds.
The results are summarized in Table~\ref{tab:heuristic-results}. When an algorithm fails to produce a solution, the corresponding cell is marked “–”.
We can see that our heuristic consistently outperforms the baselines on all solved instances. 

With large capacities, every node placement leads to a feasible solution. CPLEX already struggles to find solutions due to the very large size of the model, whereas our heuristic and the greedy algorithm always find a solution. The embeddings found by our heuristic are significantly better.
When capacities are tighter, the advantage of our method becomes more pronounced: both greedy heuristics and CPLEX find solutions only in about half of the instances with medium capacities, and almost none with small capacities. This is expected for the greedy heuristic, which relies on progressive node placement and a successive shortest path algorithm for edge routing. 
Meanwhile, our algorithm is still able to find solutions for most instances with medium capacities. 
On small-capacity instances, our heuristic finds fewer feasible solutions, particularly for large and dense virtual networks, which are naturally more challenging as they induce much more (cut) edge routing, but still largely outperforms the other algorithms.

\section{Conclusion}

This paper introduces a new integer linear programming formulation for Virtual Network Embedding, based on a partition of the virtual network.
We propose an automatic decomposition scheme that allows a good trade-off between the size of the partition and the efficiency of the pricing problems.
A first column generation approach using dedicated heuristic pricers is able to compute better lower bounds than state-of-the-art methods.
In the aim of producing solutions, we identified some column structures and devised a Price-and-Branch heuristic that obtains solutions for instances with large virtual networks, up to 110 nodes in our tests. 
This heuristic outperforms the algorithms from the literature for large instances of the Internet Topology Zoo and SNDlib, in particular with tight capacities instances, while greedy approaches fail to solve them.

Within the instance benchmark we have compiled from the literature, some instances are still unsolved. 
Leveraging the decomposition technique using other partition schemes where the virtual subgraphs can overlap should produce better lower bounds.
Since the pricers are based on the Flow Formulation, a reinforcement through valid inequalities based on a polyhedral study, will allow improvements in the efficiency of our method.
Finally, to improve the initial solution we proposed, local search methods should be investigated to reduce the gap between the lower and upper bounds we produced.

\bibliographystyle{unsrt}  
\bibliography{biblio}  

\end{document}